\newtheorem{theorem}{Theorem}[section]
\newtheorem{definition}[theorem]{Definition}
\definecolor{blue}{HTML}{1F77B4}
\definecolor{orange}{HTML}{FF7F0E}
\definecolor{green}{HTML}{2CA02C}
\pgfplotsset{compat=1.14}
\begin{document}
\title{A note on blind contact tracing at scale \\with applications to the COVID-19 pandemic}
%
%\titlerunning{Abbreviated paper title}
% If the paper title is too long for the running head, you can set
% an abbreviated paper title here
%

\author[1]{Jack K. Fitzsimons}
\author[2]{Atul Mantri}
\author[3]{Robert Pisarczyk}
\author[4]{Tom Rainforth}
\author[5]{Zhikuan Zhao}

\affil[1]{Oblivious AI, Dublin, Ireland}
\affil[2]{Joint Center for Quantum Information and Computer Science (QuICS), University of Maryland, College Park, USA}
\affil[3]{Mathematical Institute, University of Oxford, Oxford, UK}

\affil[4]{Department of Statistics, University of Oxford, Oxford, UK}
\affil[5]{Department of Computer Science, ETH Zürich, Zürich, Switzerland}

\maketitle              % typeset the header of the contribution
\begin{abstract}
The current COVID-19 pandemic highlights the utility of contact tracing, when combined with case isolation and social distancing, as an important tool for mitigating the spread of a disease \cite{ferretti2020quantifying}. Contact tracing provides a mechanism of identifying individuals with a high likelihood of previous exposure to a contagious disease, allowing additional precautions to be put in place to prevent continued transmission. 

Here we consider a cryptographic approach to contact tracing based on secure two-party computation (2PC). We begin by considering the problem of comparing a set of location histories held by two parties to determine whether they have come within some threshold distance while at the same time maintaining the privacy of the location histories. We propose a solution to this problem using pre-shared keys, adapted from an equality testing protocol due to Ishai et al \cite{ishai2013power}. We discuss how this protocol can be used to maintain privacy within practical contact tracing scenarios, including both app-based approaches and approaches which leverage location history held by telecoms and internet service providers. We examine the efficiency of this approach and show that existing infrastructure is sufficient to support anonymised contact tracing at a national level. 
\end{abstract}

\noindent \textbf{Important note: } The authors of this manuscript come from backgrounds in computer science and physics and claim no special expertise in public health or epidemiology. The purpose of this paper is to explore the application of multi-party cryptography to the problem of contact tracing. As such, all references to the role of contact tracing in public health and to the COVID-19 pandemic more generally are purely to provide context, and should not be taken as authoritative.

\section{Introduction}
Coronavirus disease 2019 (COVID-19) has spread rapidly around the globe, with many governments and health organisations battling to slow or contain the spread of the disease \cite{ferguson2020report, walker2020report}. In particular, geofencing and contact tracing have been used extensively to try to contain the virus and to identify those potentially at risk due to direct contact with confirmed cases. The aim of contact tracing is two-fold: to help and manage people who may have been exposed to infectious disease, and to stop the chain of transmission in order to control the outbreak.

The prevalence of mobile telephones, and in particular smartphones, provide a potentially powerful tool for contact tracing efforts: the vast majority of users carry their devices with them throughout the day, and the phones themselves are capable of generating detailed location histories, either via GPS or signal strength .

However, any possible implementation using them has to tackle several issues. 
Firstly, there is an issue of privacy. With the availability of users' location, there is a growing concern that private data can be used by an unauthorized entity in ways that infringe the individual right to privacy, for example by profiling user's behavior, targeting the user by stalking and spamming or drawing inferences about user’s sensitive data. Unfortunately, there is an ongoing battle between personal privacy, the approaches used by governments, and the effectiveness of the approaches applied. For example, in South Korea and Israel, there has been a large public push back against the sharing of individuals' intimate information and the use of mass surveillance approaches respectively \cite{bbc}. 

Secondly, any scheme that is used has to be both scalable and effective. For example, schemes that rely on a smartphone application that has to be installed by users on their smartphones can only be effective if a significant proportion of the population has a smartphone, takes an active step to install the app, and allows it to run continuously. Further, in many areas, a minority of people have access to smartphones but may carry other GSM mobile devices that can be a source of information. For example in 2018 India had a smartphone penetration rate of 26\% in comparison to 88\% subscription rate to wireless phones \cite{mckinsey2019report}. Even in the countries which do not have this problem, there can be important groups in the society, such as older people, who have smaller smartphone ownership rates; this is critically important for COVID-19 as older people are in a high risk group \cite{berenguer2016smartphones,who_statement}. Such solutions can also struggle to get a large enough proportion of smartphone owners to use the app. Furthermore, if they rely on a single source of data such as Bluetooth, they might miss many other ones that are more widespread such as GPS or GSM triangulation.

\subsection{Summary of our results}

In this work we offer a practical framework for identifying whether two parties have come in close direct contact based on their absolute locations without disclosing either parties' location history.
To achieve this, we have developed a simple framework which utilizes techniques from secure two-party computation (2PC), whereby one party holds a database of private location data within a virtual private cloud or firewall and the other party holds a database of locations of confirmed COVID-19 patient locations for the same time period. The first database might be coming, for example, from a telecommunications service provider while the second is held by the government (see Figure~\ref{fig:overview_fig}). The goal is to identify if any of the locations in the former database lie within a threshold distance of locations in the latter database without revealing either database directly. In the literature this problem is known under the name of private proximity testing \cite{narayanan2011location}

The approach taken in this work is to first reduce the problem of private proximity testing into a blind contact tracing by an appropriate tessellation of the surface of the Earth \cite{narayanan2011location, sinnott1984sky}. We then blindly compute a number of matches between each row of the two data sets over a fixed period of time (e.g. two weeks). In this way, we also hide the exact timestamp of the matching. We achieve this by modifying the perfectly secure protocol for equality testing presented in \cite{ishai2013power}.

We offer three use cases of how the protocol may be used in practical application to enable collaboration between government and individuals or IT/telecom companies. Finally, we outline the implementation details for such an approach, showing that the resource intensity of such protocols is easily within capacity of modern data centers. 

%Our contribution is motivated by a number of factors. Firstly, the use of hashing of locations can be problematic as the set size of locations is small and thus creating reverse lookup tables is readily available to malicious actors. Secondly, many public key encryption protocols and approaches such as that which we propose requires effective implementations of random number generation albeit random integers or random ``good" primes. Finally, key reuse may offer apparent efficiency speed ups but this opens the protocol to potential frequency based attacks on user data. As we believe that even a single poorly implemented protocol that compromises the publics privacy could severely impact public trust in the proposed government-private party collaboration, we propose using a semi-trusted key generator to distribute keys to both parties for one time use. We endeavour to propose an efficient approach to this which can be understood by the non-expert audience.

\begin{figure*}[t!]
    \centering
    \includegraphics[width=\textwidth]{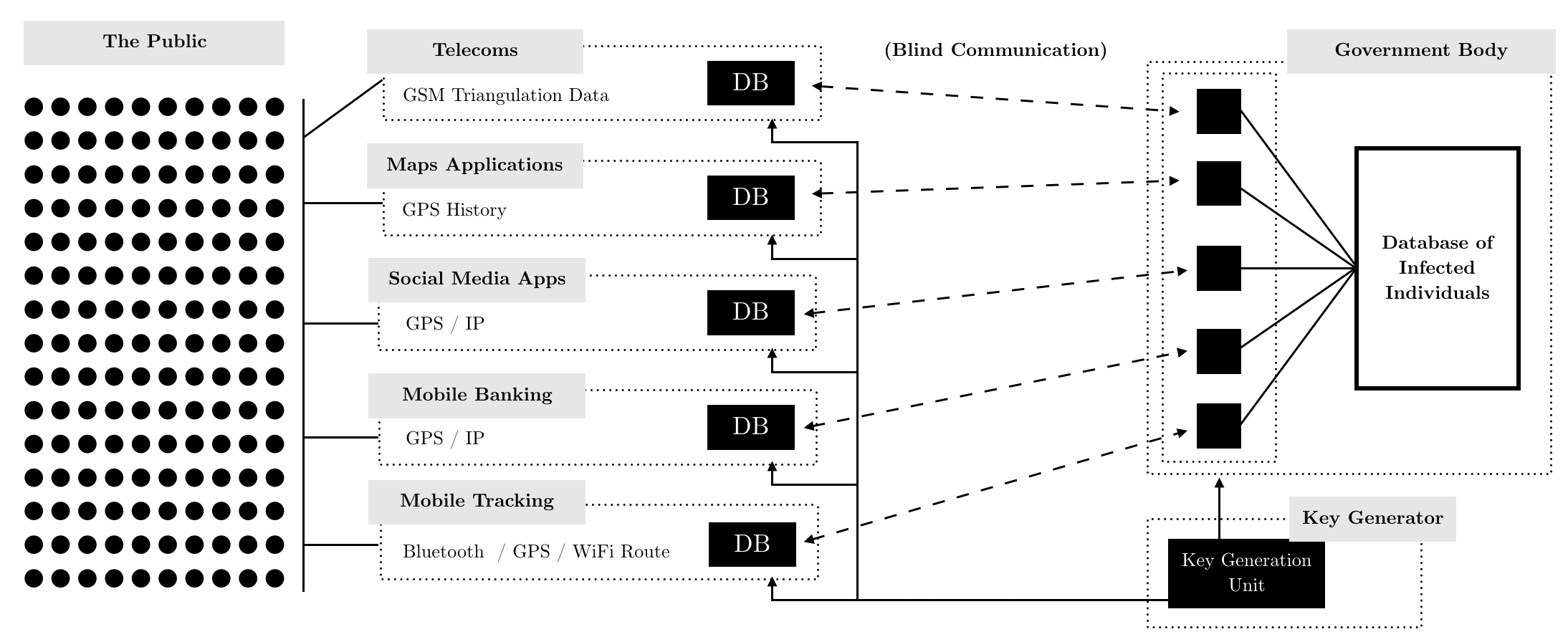}
    \caption{Outline of some of the parties who perform data collection of their users location with the proposed blind contact tracing interface between privately held databases and government. \vspace{5pt}}
    \label{fig:overview_fig}
\end{figure*}

\subsection{Related work}

Many institutions and governments have already introduced contact tracing smartphone applications or are currently working on them. Singapore has released TraceTogether~\cite{TraceTogether}, Poland published code for a prototype of their app~\cite{ProteGo}, British NHS and Germany's Fraunhofer Institute for Telecommunications are in the completion phases of their solution, all of which are based on the Bluetooth technology \cite{bell2020tracesecure}. Chan \textit{et al}.~\cite{chan2020pact} present a thorough review of contact tracing smartphone apps based on Bluetooth and presents a new third-party free protocol. The authors specifically do not consider approaches based on absolute-location, such as ours, due to a cryptographic difficulty of protecting infected patients' locations. However, we believe our approach offers a simple and efficient solution that achieves that objective. The solution closely related to ours has been proposed by MIT~\cite{MIT}. Their smartphone---Private Kit: Safe Paths---builds location trail from GPS data of users. In its current iteration, data of infected patients is shared with the government who redacts them. This is a similar setup to ours however we stress that our solution offers perfect security against active adversaries. In its third iteration, Private Kit: Safe Paths will not rely on sharing information with the government but will instead perform a multi-party computation based on location trails of users. We do not know the proposed protocol, and therefore do not comment on its security, however both iterations require high participation rates for the solution to be effective. 

Upon completion of this work, we became aware of the independent work by Berke \textit{et al} \cite{berke2020assessing}, which presents an approach for contact tracing problem via the private-set intersection problem. As far as we know, this is the only result that achieves the same functionality (c.f. Definition~\ref{def:bct} below) in the context of contact tracing. In the setting of computational security, the authors introduce a hash-based protocol with only two parties: the sender (server) and the receiver (client) without requiring the trusted dealer. They also provide an adaption of a Diffie-Hellman protocol as an example. On the other hand, we take the information-theoretic approach with the added assumption of trusted randomness dealer.  

Other cryptographic articles that are mostly related to ours concern private proximity testing. In particular, Narayanan \textit{et al}.~\cite{narayanan2011location} did a reduction of private proximity testing protocols to private equality testing. However it outputs time and position to the receiver, which we try to avoid in our setting. Šeděnka and Gasti~\cite{vsedvenka2014privacy} have studied proximity testing on Earth based on distance computation. However their functionality necessarily has a security weakness that allows for triangulation.

\section{Technical description the protocol}

\subsection{Setup}
In this article, we are interested in the secure computation for two-parties with a trusted dealer. The task of a trusted dealer is to instantiate a randomness distribution phase that will allow the two parties (sender and receiver) to securely compute a joint function. 
To aid with exposition, we will use a running example with a receiver called Alice and a sender called Bob.

Given the receiver and sender locations, denoted as $\{p_{a_t}\}$ and $\{p_{b_t}\}$ respectively,
%of two parties, a receiver and a sender respectively, 
the receiver learns how many times she was in proximity to the sender i.e.~how many times $D(p_{a_t},p_{b_t})<\Delta$, where $D$ is a distance measure, $\Delta$ is the threshold, and we assume that the locations are provided at discrete time points.
The sender receives no output and additionally, we require both the parties  not to learn anything about the other party's location or database. This is a modified blind proximity testing problem. However, we can reduce it to a blind contact tracing problem by tessellating the surface of the Earth (the choice of the grid for which is described in Section~\ref{grid}). After doing this, the locations of Alice and Bob $\{x_{t}\},\{y_{t}\}$ are discrete and correspond to the unit cells in the grid.  This reduction allows us to consider the joint function as equality on the input of the sender and receiver. The goal is to check for all $t$ if $x_t = y_t$ and output the total number of matches, $N$,  to the receiver without leaking anything about the $t$. However, when the receiver's input matches the sender's input for all the $t$ then receiver can trivially guess the sender's location. This functionality is inspired by the privacy-preserving equality testing where output $y_t$ is leaked to the receiver when $x_t = y_t$. Note that both the functionalities of equality testing and contact tracing achieve the same task  of privacy-preserving equality with the same security against the malicious sender but different security requirements when receiver is adversarial. The ideal functionality for blind contact tracing can be defined in the following way.

\begin{definition}{\textbf{Ideal functionality for blind contact tracing}}.
\label{def:bct}
The ideal functionality for blind contact tracing $\mathcal{S}_{bct}(\{x_{t}\},\{y_{y}\})$ for a receiver with input $\{x_t\}$ and sender with input $\{y_t\}$ is given by the following procedure:
%two parties Alice (receiver) with input $\{x_{t}\}$ and Bob (sender) with input $\{y_{t}\}$ can be defined in the following way:
\begin{enumerate}
    \item Take in receiver's input $\{x_{t}\}$ and sender's input $\{y_{t}\}$
    \item Compute $N = \sum_{t}s_t$ where $s_t = 1$ if $x_{t}=y_{t}$ else $s_i = 0$
    \item Return $N$ to the receiver and nothing to the sender.
\end{enumerate}
This provides correctness and perfect security against dishonest parties (sender or receiver). 
% . $\mathcal{S}_{bct}$ takes receiver's input $\{x_{t}\}$ and sender's input $\{y_{t}\}$. It computes  $M = \sum_{t}s_t$ where $s_t = 1$ if $x_{t}=y_{t}$ else $s_i = 0$. The ideal functionality outputs $M$ on receiver's side. The sender does not get any output. The ideal functionality $\mathcal{S}_{bct}$ provides correctness and perfect security against dishonest parties (sender or receiver). 
\end{definition}

\subsection{Presentation of the protocol}
Our concrete protocol for the blind contact tracing is adapted from the equality testing protocol of Ishai et al.~\cite{ishai2013power}. The blind contact tracing protocol is divided into two phases: key generation and computation. The first stage consists of three parties - trusted dealer, sender, receiver and proceeds in the following way. The sender and receiver requests correlated randomness from the trusted dealer and sends a pre-agreed (between sender and receiver) input $t$. The trusted dealer samples a tuple of permutations $(P_t, Q)$ where $P_t$ is 2-wise independent permutation chosen from a family of permutation and $Q$ is uniformly independent, chosen from $\mathcal{S}_n$. It also generates uniformly random string $r_t$. The trusted dealer sends the set $(\{P_t\}, Q)$ to sender and $(\{r_t\}, \{s_{Q(t)}\})$ to receiver where $s_{Q(t)} = P_t(r_{t})$. 

Upon receiving the randomness in the key generation phase, the sender and receiver proceed with the computation phase. It consists of one round of communication between the sender and the receiver. As a first step: receiver performs a one-time pad of their private input $\{x_t\}$ with the random strings $\{r_t\}$ to obtain $u_t$ where $u_t =  x_t +r_t$. The receiver sends $\{u_t\}$ to the sender. In the next step, the sender performs one-time pad of their private input $\{y_t\}$ with the string obtained from the receiver in the first step ($u_t$) and permutes it using the permutation $P_t$. The sender sends $v_{Q(t)}$ where $v_{Q(t)} = P_t (u_t-y_t)$ to the receiver. As the final step, the receiver checks if $s_{Q(t)}$ is equivalent to the string $v_{Q(t)}$ received from sender for every $t$. The receiver calculates the total number of matches $N$. We describe the blind contact tracing in Protocol \ref{main protocol}. 

\vspace{\baselineskip}

\begin{algorithm}
\caption{Blind Contact Tracing Protocol}
\vspace{\baselineskip}
\textbf{Receiver's input:} private location data $\{x_t \in X | t \in [n]\}$\\
\textbf{Sender's input:} private location data $\{y_t \in X | t \in [n]\}$\\
\textbf{Receiver's output:} N (total number of matches)\\
\begin{enumerate}
\item{\textbf{Key generation (trusted dealer, sender, receiver):}}
\begin{enumerate}
    \item Sample random 2-wise independent permutations ${P_t: X \mapsto X}$ and uniformly independent permutation $Q \in \mathcal{S}_n$ where $\mathcal{S}_n$ is the $n$th symmetric group and random strings ${r_t \in X}$.
    \item The sender gets $(\{P_t\}, Q)$ and the receiver gets $(\{r_t\}, \{s_{Q(t)}\})$, where $s_{Q(t)} = P_t(r_{t})$.
\end{enumerate}
\item {\textbf{Computation (sender, receiver):}}
\begin{enumerate}
\item The receiver computes $u_t = x_t +r_t$ and sends the collection $\{u_t\}$ to sender.
\item The sender computes $v_{Q(t)} = P_t (u_t-y_t)$ and sends the permuted collection $\{v_{Q(t)}\}$ to the receiver.
\item The receiver then compares $s_{Q(t)}$ to $v_{Q(t)}$ and outputs $N$, where $N:=\sum_{Q(t)} w_{Q(t)}$ and $w_{Q(t)} = 1$ if $s_{Q(t)} = v_{Q(t)}$ otherwise $w_{Q(t)} = 0$.

\end{enumerate}
\end{enumerate}
\label{main protocol}
\end{algorithm}

\subsection{Informal security argument}
We now present an informal security argument, with more formal demonstrations presented in Appendix~\ref{app:security}. The correctness follows from the correctness of the one-time pad and permutation (2-wise and uniform). For the privacy of the receiver's input against a dishonest sender we argue in the following way. Note that the sender receives $(\{P_t\}, Q)$ from trusted dealer during the key generation phase and  $\{x_t +r_t\}$  from the receiver in the computation phase. The security of $\{x_t\}$  reduces to the one-time pad, assuming the trusted dealer doesn't collude with the sender and generates uniformly random bits. The privacy of the sender's input against a dishonest receiver is the following. In a single run of protocol, the receiver obtains  $(\{r_t\}, \{s_{Q(t)}\})$ from the trusted dealer and  $\{v_{Q(t)} = P_t (u_t-y_t)\}$ from the sender. The sender's input privacy follows from the security guarantee of 2-wise random permutation  and under the same assumption of no-collusion between the trusted dealer and the receiver. The receiver checks if $v_{Q(t)} = s_{Q(t)}$, but it doesn't get any information about $t$ since $Q$ is a random permutation over $t$. 

It is not difficult to verify that the blind contact tracing protocol is perfectly secure against active adversaries and is optimal in terms of communication complexity ( as shown in~\cite{ishai2013power}. We also note that information theoretic security in two-party computation is not possible in the plain model unless certain assumptions such as trusted agents or shared secrets are made. Thus, our assumption of the trusted dealer can be justified in the sense that it allows us to achieve a strong security notion (information-theoretic) while preserving the efficiency of the protocol. Since the trusted dealer is never involved in the computation phase, as long as it do not collude with either of the parties, the protocol remain secure against any other attack by the trusted dealer.

\section{Defining the receiver for COVID-19}
Up until now, we have presented our protocol in terms of an arbitrary receiver and sender. However, in the implementation and depending on the objectives one needs to specify who the receiver is. Here we consider three cases specific to COVID-19.

\begin{enumerate}
    \item \textbf{An individual is the receiver and the government is the sender}. The privacy of the individual, in that case, is guaranteed straightforwardly by the security of blind contact tracing protocol. In practice, this requires direct communication between the individual and the government while the individuals would have to possess the database with their location trails.
    \item \textbf{Service Provider (IT/Telecom company) is the receiver, the governmental body is the sender}. This is a scenario under which the service provider is the receiver and it notifies the users whether they were in contact with an infected person or not. In practice, this is easier to implement than the previous scenario. Again, blind contact tracing protocol offers perfect security in that case under the assumption the service provider is not colluding with the government, which is in any case, the assumption whenever any such service is used.
    \item \textbf{Government is the receiver}. The scenario under which the government is the receiver might be particularly useful if the government wants to learn about the statistics of people who were in the proximity of infected patients. If it wishes so, the government could then either directly or via service provider inform the individual of their risk, without knowing their identity.
\end{enumerate}
\section{Implementation}
To determine the feasibility of our approach to applications, such us contract tracing for  COVID-19, we implement the blind contact tracing protocol using severless infrastructure available on cloud computing platforms.  The first crucial elements to consider in the implementation is the input data and the grid construction that we apply to it. The second one is the resource intensiveness. We discuss both in more detail in the following sections.

\subsection{Input data and grid construction}\label{grid}

The raw input to the protocol may come in a variety of forms such as GPS location information, GSM triangulation data, Bluetooth beacons, and router mac addresses amongst others. In many cases, we can expect the input to be a continuous value such as longitude and latitude. To make this data fit the paradigm of the blind contact tracing, we need to map locations to some discrete tiling over the surface of the earth. There are a few ways to do this such as constructing a square grid or hexagonal patterns, which are more efficient in terms of search queries, as seen in \cite{narayanan2011location}. Both approaches satisfy the protocols correctness. The second consideration is that the Earth is not flat. To avoid error we must scale the mapping of location to tiles depending on the location on earth we are considering. In \cite{vsedvenka2014privacy}, the Haversine approximation as outlined which approximates Earth as a sphere and has an approximation error rate of $< 0.1\%$. We used an alternative approach, recommended by the FCC \cite{officercode}, which uses an ellipsoid approximation to earth and has significantly less error over distance under a few hundred kilometers. To scale GPS coordinates to kilometers under this approximation we calculate two constants which we multiply the latitude and longitude by

\[
\begin{array}{ll}
    C_{\text{Lat}} &=  111.13209 - 0.56605\cos(2L\pi/180) + 0.0012\cos(4L\pi/180) \\
    C_{\text{Lon}} &= 111.41513\cos(L\pi/180) - 0.09455\cos(3L\pi/180) + 0.00012\cos(5L\pi/180)
\end{array}
\]

\noindent where $L$ is the longitude of the point of interest.

\subsection{Resource Intensiveness}
\label{sec:resource}

An important property of the proposed solution is its ability to scale to deal with contact tracing at a national level. To this end, it is trivial to show how the solution scales at a constant rate for the data. The approach is easily run in parallel across containers to meet the demand required. To get a ballpark estimate of the demand let us consider the scenario for a small, medium and large state based on estimated figures taken from Ireland, Poland and Japan roughly at the time of publication. We have also included similar comparisons at a synthetic city level of abstraction. We search a $3\times3$ overlapping grid of $7\times7$ meters each about a location to look for a match to avoid edge effects. We also check a time window of $40$ minutes around the timestamp associated with a location. Time is quantized into $20$ minute periods and we search for paths crossed in a single day for exemplary purposes. Of course, if we were to ignore edge effects we could reduce the over head by a small factor or if we wished to increase accuracy we could sample a finer resolution of accuracy about each point. This trade off is left to the users of the protocol.

%\begin{center}
%\begin{adjustbox}{width=\textwidth, rotate=0}
\begin{table}[h!]
\scalebox{0.92}{
\begin{tabular}{@{}cccccccccccccc@{}}
\toprule
&& \multicolumn{5}{l}{\textbf{Nation}} && \multicolumn{5}{l}{\textbf{City}}\\
\textbf{Size} &\phantom{a}& \textbf{Small} &\phantom{a}& \textbf{Medium} &\phantom{a}& \textbf{Large}&\phantom{a}& \textbf{Small} &\phantom{a}& \textbf{Medium} &\phantom{a}& \textbf{Large}\\ \midrule
 \textbf{Population (M)}&& 4.83 && 37.98 && 1339 && 1 && 5 && 20\\
 \textbf{Cases}&& 5364  && 4666  && 5172 && 100 && 200 && 500   \\
 \textbf{Keys (TB)}&& 6.1 $\times10^2$  && 4.2 $\times10^3$ && 1.1 $\times10^4$ &&2.35&& 23.5 && 235.7\\ 
 \textbf{Comms (TB)}&& 8.5 $\times10^2$  && 5.8 $\times10^3$ && 1.6 $\times10^4$ &&3.3&& 33.0 && 330.0\\ \bottomrule
\end{tabular}
}
\caption{ The table outlines the keys required (Keys) and communication overhead (Comms) to identify paths of the entire population of countries and cities that crossed with COVID-19 patients.}
\end{table}

%\end{adjustbox}
%\end{center}

\noindent While the above numbers are, of course, rough estimates it gives us confidence that the proposed solution is feasible leveraging available cloud services. These figures are also making the naive assumption that one would cross-check confirmed cases with the entire country, rather than for example cross-check on a state by state level. For example, on Amazon Web Services 25Gb/s communication interfaces are currently available and on Oracle Cloud up to 100Gb/s communication interfaces are available per instance.  Similar IO rates can be found with Microsoft Azure, Google Cloud, and other major cloud providers. In parallel systems we can transfer and compute many times this amount. While these requirements are certainly not negligible, they are technically feasible for deployment.

\section{Conclusions}
In this work we endeavour to overcome the potential challenges faced by computational security approaches in the context of contact tracing. The proposed protocol offers an information theoretic security and is optimal in terms of communication complexity. 
Furthermore, our solution has advantages in terms of effectiveness. It can use different sources of geolocation data such as GPS, cell-towers triangulation, Bluetooth or WiFi routers, and is not restricted to a single mode. Importantly it can be used on historic data collected by different means. It can be readily implemented on existing infrastructure by  governmental institutions and third parties that have access to geolocation data including telecommunication service providers or IT enterprises. As such, it does not require installation of a separate smartphone application. By using cloud computing infrastructure, we have developed a scalable key generation service to facilitate these efforts. Access to the service, along with a lightweight library in Python to leverage service, is available to the wider community upon request\footnote{Please email \url{covid19@oblivious.ai}}.

\section{Acknowledgements}

We would like to acknowledge Oracle Cloud, Amazon Web Services and Strategic-Blue for their support. All authors would like to thank Joe Fitzsimons for his insights and discussions. A.M gratefully acknowledges funding from the AFOSR MURI project ``Scalable Certification of Quantum Computing Devices and Networks''. R.P. thanks EPSRC (UK). A.M would also like to thank Ashima Arora for some helpful discussion regarding implementation. 

\bibliographystyle{unsrt}
\bibliography{biblio}

\appendix

\section{Security proofs}
\label{app:security}
An important question for public privacy is how difficult would it be for an adversarial government to query the database in order to perform mass surveillance and tracking of citizens.

We require a blind contact tracing protocol to satisfy two properties: correctness and input privacy. The former deals with the setting when all the parties (sender, receiver, dealer) follow the steps of the protocol as described in Protocol~\ref{main protocol}. For the latter, we consider two sub-cases - a) the receiver's input privacy when the sender is cheating and b) the sender's input privacy when the receiver is cheating. It is important to note that we assume that a trusted dealer behaves honestly and does not collaborate with any malicious party. 

\begin{definition}[Correctness]
We say that a blind contact tracing protocol $\mathcal{P}_{x,y}$ (where both the parties sender and receiver are honest) is correct if for all inputs $(x,y)$ the parties output from running the protocol $\mathcal{P}_{x,y}$ is exactly equal to the output of parties in ideal blind contact tracing functionality $\mathcal{S}_{bct}(x, y)$.
\end{definition}

\begin{theorem}[Correctness]
The blind contact tracing protocol shown in Protocol~\ref{main protocol} is perfectly correct. That is for every random 2-wise permutation ${P_t: X \mapsto X}$, uniformly random permutation $Q \in \mathcal{S}_n$, every message ${r_t \in X}$, $s_{Q(t)}$ (with $s_{Q(t)} = P_t(r(t))$) and every input $\{x_t \in X | t \in [n]\}$, $\{y_t \in X | t \in [n]\}$ it holds that receiver obtains $N  := \sum_{Q(t)} w_{Q(t)} = \sum_{k} z_{k}$, where $z_{k}$ is $1$ if $x_{k} = y_{k}$ otherwise  $0$. On the other hand, the sender obtains no output. 
\end{theorem}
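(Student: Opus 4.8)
The plan is to prove correctness by direct substitution, tracking the value held by each party through the two phases of Protocol~\ref{main protocol} and then invoking injectivity of the permutations. First I would fix arbitrary inputs $\{x_t\},\{y_t\}$ and an arbitrary realisation of the dealer's randomness $(\{P_t\}, Q, \{r_t\})$ as in the statement. The second assertion is immediate: in the computation phase the receiver's only message to the sender is $\{u_t\}$, and the sender sends $\{v_{Q(t)}\}$ back but receives nothing further, so the sender indeed produces no output. The content of the theorem is therefore entirely about the quantity $N$ recovered by the receiver.

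For that, I would substitute the definitions from the computation phase. The receiver forms $u_t = x_t + r_t$, so the sender returns $v_{Q(t)} = P_t(u_t - y_t) = P_t(x_t - y_t + r_t)$, while the receiver already holds $s_{Q(t)} = P_t(r_t)$. Since each $P_t : X \to X$ is a permutation, it is in particular injective, hence $v_{Q(t)} = s_{Q(t)}$ if and only if $x_t - y_t + r_t = r_t$, i.e.\ (cancelling $r_t$ in the underlying group structure on $X$) if and only if $x_t = y_t$. Consequently the indicator $w_{Q(t)}$ computed in the last step of the computation phase equals $1$ exactly when $x_t = y_t$ and $0$ otherwise; that is, $w_{Q(t)} = z_t$, with $z_t$ the equality indicator of the pair $(x_t, y_t)$.

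It then remains to handle the bookkeeping of the outer permutation $Q$. Because $Q \in \mathcal{S}_n$ is a bijection of $[n]$, the reindexing $t \mapsto Q(t)$ merely permutes the summands, so $N = \sum_{Q(t)} w_{Q(t)} = \sum_{k \in [n]} w_k = \sum_{k \in [n]} z_k$, which is precisely the value $\sum_t s_t$ returned by the ideal functionality $\mathcal{S}_{bct}$ of Definition~\ref{def:bct}. I would close by noting that the argument used nothing about $P_t$ beyond its being a permutation and nothing about $Q$ beyond its being a bijection, so the equality holds for \emph{every} choice of the dealer's randomness, giving perfect correctness rather than correctness in expectation.

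There is no substantial obstacle here: the statement is a one-line computation dressed up with indices. The only points deserving a line of care are expository — being explicit that $X$ carries the (abelian, finite) group structure that makes the one-time pad "$+$'' and the cancellation of $r_t$ meaningful, and observing that the time-label permutation $Q$ never interacts with the per-coordinate permutations $P_t$, so it can only reorder the terms of the final sum and cannot change its value. Once these are pinned down, the proof is immediate.
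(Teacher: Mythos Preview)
Your proof is correct and follows essentially the same direct-substitution route as the paper: both compute $v_{Q(t)}$ and $s_{Q(t)}$ explicitly, reduce the comparison to $x_t=y_t$, and then note that $Q$ merely reorders the summands. Your appeal to injectivity of $P_t$ is in fact slightly cleaner than the paper's intermediate line $P_t(u_t-y_t)-P_t(r_t)=P_t(u_t-y_t-r_t)$, which tacitly treats $P_t$ as additive rather than just bijective.
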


\begin{proof}
The correctness follows straightforwardly from the property of 2-wise permutation ${P_t}$ and uniformly random permutation $Q$ as well as the correctness of one-time pad. The output $N$ is incremented on the receiver side if and only if $x_t - y_{t}$ is nonzero. To see this, let's consider
\begin{equation}
\label{eq:correctness1}
    v_{Q(t)} - s_{Q(t)}  = P_t(u_t - y_t) - P_t(r_t) = P_t(u_t - y_t -r_t)
\end{equation}
As we know, the receiver computes $u_t = x_t + r_t$, therefore Eq.~\ref{eq:correctness1} can be equivalently written as
\begin{equation}
\label{eq:correctness2}
    v_{Q(t)} - s_{Q(t)}  = P_t(x_t  - y_t)
\end{equation}
Therefore, $v_{Q(t)} = s_{Q(t)}$ if and only if $x_t  = y_t$. Thus, $N := \sum_{Q(t)} w_{Q(t)} = \sum_{Q(t)} z_{Q(t)}$. 
\end{proof}

\begin{definition}[Privacy against malicious party]
We say that a blind contact tracing protocol $\mathcal{P}(x,y)$ is: 
\begin{enumerate}
    \item secure against dishonest receiver if it does not leak anything about sender's input $y$ except $N$ (total number of $i$'s where $x_i = y_i$).
    \item secure against dishonest sender if it does not leak anything about receiver's input $x$.  
\end{enumerate}
\end{definition}

\begin{theorem}[Privacy against malicious party]
The blind contact tracing protocol shown in Protocol~\ref{main protocol} is perfectly secure against a malicious sender or a malicious receiver (assuming the trusted dealer doesn't collude with any adversarial party). 
\end{theorem}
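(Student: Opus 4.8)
The plan is to prove the two privacy claims separately using the standard simulation paradigm for two-party computation with a trusted dealer, exhibiting for each malicious party a simulator that, given only the legal input/output of that party, produces a view that is \emph{identically distributed} (not merely computationally close) to the real view. Throughout I would use the no-collusion assumption to treat the dealer's randomness $(\{P_t\}, Q, \{r_t\})$ as genuinely uniform from the adversary's perspective.

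For the \textbf{malicious sender}, the view consists of $(\{P_t\}, Q)$ received from the dealer together with the single message $\{u_t\}$ received from the receiver, where $u_t = x_t + r_t$. First I would observe that, since the $r_t$ are uniform and independent of $\{x_t\}$ and of $(\{P_t\}, Q)$, the collection $\{u_t\}$ is uniformly distributed on $X^n$ and independent of everything else the sender sees. Hence the simulator simply samples $(\{P_t\}, Q)$ exactly as the dealer would and samples each $u_t$ uniformly at random; this reproduces the real view perfectly. Since the sender receives no protocol output, nothing further is needed. This step is essentially a one-time-pad argument and I expect it to be routine.

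For the \textbf{malicious receiver}, the view consists of $(\{r_t\}, \{s_{Q(t)}\})$ from the dealer and $\{v_{Q(t)}\}$ from the sender, and the receiver is entitled to learn $N$. Using Eq.~\ref{eq:correctness2}, $v_{Q(t)} = s_{Q(t)} + P_t(x_t - y_t)$, so conditioned on the dealer's output the message $\{v_{Q(t)}\}$ is determined by the values $\{P_t(x_t - y_t)\}$ together with the permutation $Q$ that relabels the indices. The key point is that for each $t$ with $x_t \neq y_t$, 2-wise independence of $P_t$ makes $P_t(x_t - y_t)$ uniform on $X \setminus \{P_t(0)\} = X \setminus \{s_{Q(t)}\}$, independently across those coordinates and independent of $\{r_t\}$; for each $t$ with $x_t = y_t$ we get exactly $s_{Q(t)}$. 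Thus, modulo the hidden relabelling by $Q$, the sequence $\{v_{Q(t)} - s_{Q(t)}\}$ is a uniformly random string of zeros and nonzeros with exactly $N$ zeros. The simulator therefore: samples $\{r_t\}$ uniformly; samples $Q \in \mathcal{S}_n$ uniformly; picks an arbitrary subset of $N$ indices to be ``matches''; sets $s_j$ on those positions to $P_t(r_t)$-style values and sets $v_j$ accordingly; and on the remaining positions samples $v_j$ uniform subject to $v_j \neq s_j$. Because $Q$ is a uniformly random permutation unknown to the receiver, which $N$ of the indices are the true matches is perfectly hidden, so the simulated and real views coincide.

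The main obstacle is making the last argument fully rigorous: one must check that 2-wise independence of each $P_t$ is exactly the right amount of independence (it gives uniformity of $P_t(x_t-y_t)$ for a single nonzero argument, jointly with $P_t(0)$, but no more is needed because the $P_t$ are independent across $t$), and one must verify that conditioning on $N$ and then averaging over the uniform $Q$ does not reintroduce any dependence on \emph{which} timestamps matched. I would handle this by writing the joint distribution of $(\{r_t\},\{s_j\},\{v_j\})$ as a product over $t$ after conditioning on $Q$, showing it depends on the inputs only through the multiset of match-indicators, and then noting that a uniform $Q$ turns that multiset into the single number $N$. Once this is established, correctness (already proved) plus the two simulators give perfect security against active adversaries, completing the proof.
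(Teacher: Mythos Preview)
Your approach is the same simulation-based argument the paper uses, and your treatment of the malicious-receiver case is in fact more careful than the paper's (you spell out why the uniform $Q$ collapses the match pattern down to the single number $N$, whereas the paper's sketch implicitly calls the functionality coordinate-by-coordinate). One caveat on that side: do not lean on Eq.~\ref{eq:correctness2} literally, since a 2-wise independent permutation need not be additive; argue directly that whenever $x_t\neq y_t$ the pair $\bigl(P_t(r_t),\,P_t(x_t-y_t+r_t)\bigr)$ is uniform over ordered pairs of distinct elements, which is exactly what 2-wise independence gives you and is all you actually use.

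There is, however, a genuine (if small) gap in your malicious-sender simulator. In the real/ideal paradigm for \emph{active} adversaries the simulator must not only reproduce the adversary's view but also make the honest party's output in the ideal world match the real execution; the joint distribution of (adversary's view, honest receiver's output) is what must be identical. A malicious sender can send arbitrary $\{v_{Q(t)}\}$, so your simulator has to \emph{extract} an effective input $\{y_t\}$ from that message and submit it to $\mathcal{S}_{bct}$, otherwise the ideal-world receiver has no input on which to compute $N$. The paper does exactly this: after sending a uniformly random $\{u_t\}$ and receiving $\{v_{Q(t)}\}$, it sets $y_t = u_t - P_t^{-1}(v_{Q(t)})$ and forwards $\{y_t\}$ to the ideal functionality. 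Your sentence ``since the sender receives no protocol output, nothing further is needed'' is therefore not correct; add the extraction step and the argument goes through.
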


\begin{proof}

The security proof of Protocol~\ref{main protocol} is only a slight modification of the one presented in \cite{ishai2013power} and uses a simulation proof technique \cite{lindell2017simulate}. We shall construct a simulator that creates a view for the adversary's which is indistinguishable from the real execution of the protocol.
The simulator holds all the information from the key generation part i.e. $(\{P_t\},Q, \{r_t\}, \{s_{Q(t)}\})$. Now consider the two cases of malicious sender or malicious receiver.
\begin{enumerate}
\item Malicious sender:  The simulator generates randomly $u_t$ and send it to the adversary (the sender). The simulator receives $v_{Q(t)}$ from the adversary and computes $y_t = u_t - P^{-1}_t (v_{Q(t)})$. It sends $y_t$ to the ideal blind contact tracing functionality $\mathcal{S}_{bct}$.
\item Malicious receiver - The simulator calculates $x_t= u_t - r_t$.  and inputs it to the ideal blind contact tracing functionality $\mathcal{S}_{bct}$. If the ideal functionality outputs $1$, the simulator sends $s_{Q(t)}$ to the adversary. If it outputs 0, the simulator chooses randomly $v_{Q(t)}$ such that $v_{Q(t)} \neq  s_{Q(t)}$ and sends it to the adversary.
\end{enumerate}
\end{proof}

\end{document}